\documentclass{article}
\usepackage{graphicx}        % for figures
\usepackage{natbib}          % flexible author–year citations
\bibliographystyle{plainnat} % or abbrvnat, apa, etc.
% in the preamble
\usepackage{amsmath,amsthm}
\usepackage{amsfonts} % or \usepackage{amssymb}
\usepackage{mathtools}
\usepackage{microtype} % helps with line breaks in text
\DeclareMathOperator{\Var}{Var}
\DeclareMathOperator{\Cov}{Cov}
\DeclareMathOperator{\plim}{plim}
\newtheorem{theorem}{Theorem}
\newcommand{\E}{\mathbb{E}}

\title{Note on Selection Bias in Observational Estimates of Algorithmic Progress}
\author{Parker Whitfill}
\date{August 2025}

\begin{document}

\maketitle

\section{Introduction}
\cite{ho2024algorithmic} is a very interesting paper that attempts to estimate the degree of algorithmic progress from language models. They collect observational data on language models' loss and compute over time, and argue that as time has passed, language models' algorithmic efficiency has been rising. That is, the loss achieved for fixed compute has been dropping over time. In this note, I want to raise one potential methodological problem with the estimation strategy. Intuitively, if part of algorithmic quality is latent (not observed in \cite{ho2024algorithmic}'s data), and compute choices are endogenous to algorithmic quality, then the estimation strategy in \cite{ho2024algorithmic} will not recover unbiased estimates of the true degree of algorithmic progress because of selection bias. 

\subsection{Background}
In \cite{ho2024algorithmic}, the loss $L$ is estimated as  
\begin{equation}\label{eq:scaling}
L = E + \frac{A}{(N q_N)^{\alpha}} + \frac{B}{(D q_D)^{\beta}}
\end{equation}
where:
\begin{itemize}
    \item $N$ is the number of model parameters,
    \item $D$ is the number of training data points,
    \item $q_D$ and $q_N$ are \emph{productivity factors} that capture how efficiently parameters or data are used
    \item $E$ is irreducible loss. 
\end{itemize}

To actually estimate this equation, \cite{ho2024algorithmic} assume that the productivity terms grow exponentially with calendar time. That is, equation 2 in \cite{ho2024algorithmic} sets 
\begin{equation}\label{eq:prod-time}
q_N(Y) = \exp\!\left(\alpha'(Y - Y_0)\right),\quad
q_D(Y) = \exp\!\left(\beta'(Y - Y_0)\right)
\end{equation}

where $Y$ is calendar year and $Y_0$ is the reference year. Substituting back into the scaling law yields the main estimation equation in \cite{ho2024algorithmic}, 
\begin{align}\label{eq:estimation}
L  &= E + \frac{A}{N^{\alpha}} e^{-\alpha\alpha'(Y - Y_0)}
   + \frac{B}{D^{\beta}} e^{-\beta\beta'(Y - Y_0)} \notag \\
   &\coloneqq E + \frac{A}{N^{\alpha}} e^{-\alpha_{\mathrm{year}}(Y - Y_0)}
   + \frac{B}{D^{\beta}} e^{-\beta_{\mathrm{year}}(Y - Y_0)} .
\end{align}

\subsection{Data-Generating Process}
Equation \ref{eq:prod-time} is hard to believe if taken literally. It says that algorithmic progress is a deterministic function of time with no heterogeneity across labs in the same year. This would, for example, rule out Anthropic having better algorithms than xAI, which in turn has better algorithms than academia in 2025. 

To accommodate this intuition, we can instead take equation \ref{eq:prod-time} as the best time-exponential approximation instead of holding exactly. That is, 
\begin{equation}
q_N(Y) = \exp\!\left(\alpha'(Y - Y_0) + \epsilon_N \right), 
\quad
q_D(Y) = \exp\!\left(\beta'(Y - Y_0) + \epsilon_D \right)
\end{equation}
where $\E[\epsilon_N] = \E[\epsilon_D] = \Cov(Y, \epsilon_D) = \Cov(Y, \epsilon_N) = 0$.\footnote{Formally, $\epsilon_N$ is defined as the residual from regressing $\ln q_N$ on $Y-Y_0$ and likewise for $\epsilon_D$. } $\epsilon_N, \epsilon_D$ capture within-year, across-lab algorithmic heterogeneity.

We will suppose the rest of the Data-Generating Process (DGP) follows equation \ref{eq:estimation} exactly. Therefore, the DGP is given by the following equation, 
\begin{equation}\label{DGP}
L 
= E + \frac{A}{(e^{\epsilon_N} N)^{\alpha}} \, e^{-\alpha_{\mathrm{year}} (Y - Y_0)}
+ \frac{B}{(e^{\epsilon_D} D)^{\beta}} \, e^{-\beta_{\mathrm{year}} (Y - Y_0)} .
\end{equation}

Suppose we estimate equation \ref{eq:estimation} from data generated by equation \ref{DGP}. Then our estimates of $\alpha, \beta$ can be biased by correlations between $\epsilon_N, \epsilon_D$ and $N, D$ for standard selection bias reasons. For example, if $D$ and $\epsilon_D$ are positively correlated, then when $D$ increases $L$ decreases from both $D$ increasing and $\epsilon_D$ increasing, but credit is all assigned to $D$, resulting in a biased estimate of $\beta$. 

Should we expect selection bias between the latent parts of algorithmic quality and compute usage? Standard economic logic would indicate yes: compute usage is chosen by labs endogenously after observing their \emph{algorithmic} quality (say on smaller experiments). 

\subsection{Formal Bias Analysis}
For simplicity, let's focus on the bias in just \emph{dataset size} by assuming $N \approx \infty$ and $E \approx 0$.\footnote{The same argument holds symmetrically for $N,\epsilon_N$. } This results in the following DGP
\[
L 
= \frac{B}{(e^{\epsilon_D}D)^{\beta}} \, e^{-\beta_{\mathrm{year}} (Y - Y_0)}.
\]

Taking logs results in 
\begin{equation}\label{DGP_simple}
    \ln L
= \ln B - \beta\ln D - \beta_{\mathrm{year}}(Y-Y_0) - \beta\epsilon_D.
\end{equation}

Similarly, taking logs of the estimation equation results in 
\begin{equation}\label{estimation_simple}
    \ln L
= \widehat{\ln B} - \hat{\beta} \ln D - \hat{\beta}_{\mathrm{year}}(Y-Y_0).
\end{equation}

Estimating equation \ref{estimation_simple} from DGP \ref{DGP_simple} now looks like the familiar omitted variable problem in econometrics. It is not guaranteed that $\Cov(\ln D, \epsilon_D) = 0$ which means our estimates may be biased by selection. 

Recall the ultimate object of interest is $\frac{\beta_{\mathrm{year}}}{\beta}$ which is the rate of algorithmic progress. The following theorem signs the bias on this object. 

\begin{theorem}
\label{thm:bias-sign}
Assume \(\Cov(\ln D, Y)>0\), \(\beta>0\), \(\beta_{\mathrm{year}}>0\), and
\[
\Cov(\ln D,\epsilon_D)\;>\;
-\Var\!\left(
  \ln D-\frac{\Cov(\ln D,Y)}{\Var(Y)}\,Y
\right).
\]
Define the large-sample bias from \eqref{estimation_simple} as
\[
\operatorname{bias}\!\left(\tfrac{\hat{\beta}_{\mathrm{year}}}{\hat{\beta}}\right)
:=\plim\!\left(\tfrac{\hat{\beta}_{\mathrm{year}}}{\hat{\beta}}\right)
  -\tfrac{\beta_{\mathrm{year}}}{\beta}.
\]
Then
\[
\operatorname{sign}\!\left(\operatorname{bias}\!\left(\tfrac{\hat{\beta}_{\mathrm{year}}}{\hat{\beta}}\right)\right)
= -\,\operatorname{sign}\!\bigl(\Cov(\ln D,\epsilon_D)\bigr).
\]
\end{theorem}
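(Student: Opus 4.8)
The plan is to treat \eqref{estimation_simple} as an OLS regression of $\ln L$ on $\bigl(1,\ln D,\,Y-Y_0\bigr)$ and to read off $\plim\hat\beta$ and $\plim\hat\beta_{\mathrm{year}}$ from the omitted-variable-bias (equivalently, Frisch--Waugh--Lovell) formula. Since \eqref{DGP_simple} holds exactly, the ``long'' regression of $\ln L$ on $\bigl(1,\ln D,\,Y-Y_0,\,\epsilon_D\bigr)$ has coefficients $(\ln B,-\beta,-\beta_{\mathrm{year}},-\beta)$ and zero residual, so the probability limit of each short-regression coefficient equals the corresponding long coefficient plus $(-\beta)$ times the partial regression coefficient of $\epsilon_D$ on that regressor. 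Throughout write $C:=\Cov(\ln D,\epsilon_D)$, $\rho:=\Cov(\ln D,Y)$, $\Delta:=\Var(\ln D)\Var(Y)-\rho^{2}$, and $V:=\Var\!\bigl(\ln D-\tfrac{\rho}{\Var(Y)}\,Y\bigr)$, the variance appearing in the hypothesis; note $\Var(Y)>0$ and $\Var(\ln D)>0$ follow from $\rho>0$, that OLS identification forces $\Delta>0$, and that $V=\Delta/\Var(Y)>0$.

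The first step computes the two needed partial regression coefficients of $\epsilon_D$ in the regression of $\epsilon_D$ on $(1,\ln D,\,Y-Y_0)$. Residualizing $\ln D$ on $(1,Y-Y_0)$ and using the hypothesis $\Cov(Y,\epsilon_D)=0$ to kill the cross term gives coefficient $C/V$ on $\ln D$; symmetrically, residualizing $Y-Y_0$ on $(1,\ln D)$ gives coefficient $-\rho C/\Delta$ on $Y-Y_0$. Plugging into the OVB formula yields
\[
\plim\hat\beta=\beta\Bigl(1+\tfrac{C}{V}\Bigr)=\beta\,\tfrac{V+C}{V},
\qquad
\plim\hat\beta_{\mathrm{year}}=\beta_{\mathrm{year}}-\tfrac{\beta\rho C}{\Delta}.
\]

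The second step is algebra: form the ratio, use $\Delta/V=\Var(Y)$ to combine denominators, subtract $\beta_{\mathrm{year}}/\beta$, and cancel the matching terms in the numerator, which leaves
\[
\operatorname{bias}\!\left(\tfrac{\hat\beta_{\mathrm{year}}}{\hat\beta}\right)
=\frac{-\,C\,\bigl(\beta\rho+\beta_{\mathrm{year}}\Var(Y)\bigr)}{\beta\,\bigl(\Delta+C\,\Var(Y)\bigr)}.
\]
The final step signs the three factors multiplying $-C$: the numerator factor $\beta\rho+\beta_{\mathrm{year}}\Var(Y)>0$ since $\beta>0$, $\rho>0$, $\beta_{\mathrm{year}}>0$; $\beta>0$ by assumption; and $\Delta+C\,\Var(Y)=\Var(Y)\,(V+C)>0$ exactly because the hypothesis says $C>-V$ (with $\Var(Y),\Delta>0$) --- this last inequality is precisely the statement that $\plim\hat\beta>0$, i.e.\ that the fitted regression does not spuriously reverse the sign of the compute elasticity, which is what makes the sign of the ratio's bias meaningful. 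Hence $\operatorname{sign}\bigl(\operatorname{bias}\bigr)=\operatorname{sign}(-C)=-\operatorname{sign}\bigl(\Cov(\ln D,\epsilon_D)\bigr)$.

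I expect the only real obstacle to be bookkeeping: checking that the cross-terms in the ratio cancel cleanly enough to leave the single-term numerator above, and verifying that the non-degeneracy conditions ($\Var(Y)>0$, $\Delta>0$) that OLS tacitly requires are in fact implied by the stated hypotheses. Once the bias is in the displayed closed form, the sign assumptions in the theorem do all the remaining work.
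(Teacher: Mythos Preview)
Your proof is correct and follows essentially the same route as the paper: both compute $\plim\hat\beta$ and $\plim\hat\beta_{\mathrm{year}}$ from the population OLS problem, form the ratio, and sign the resulting closed-form bias using the stated hypotheses. The only cosmetic difference is that you invoke the Frisch--Waugh--Lovell/omitted-variable-bias formula to read off the partial coefficients of $\epsilon_D$ directly, whereas the paper writes out and solves the $2\times 2$ population normal equations; the resulting expressions for $\plim\hat\beta$, $\plim\hat\beta_{\mathrm{year}}$, and the bias coincide exactly.
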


Note the first three assumptions of the theorem are very mild as they simply say dataset size has been growing over time, returns to dataset size are positive and algorithmic progress is positive. The final assumption is a technical assumption that lets us more cleanly sign the bias. Intuitively, it states that the selection between dataset size and algorithmic progress cannot be too negative. 

It's hard to know the sign of $\Cov(\ln D, \epsilon_D)$. One argument for it being positive is that firms with the largest datasets also tend to have the best algorithms (strong data-engineering pipelines co-move with algorithmic know-how). On the flip side, one argument for it being negative is that having better algorithms lessens the need to rely on data scaling. For example, suppose Anthropic and xAI are competing to release the best product. Suppose Anthropic has sufficiently better algorithms such that at modest data they can release a better model than xAI can release with massive data. If Anthropic cares only about releasing a leading model, then they have no incentive to scale data past modest amounts, since they are already winning with less data. Therefore, having better algorithms (high $\epsilon$) might be negatively correlated with dataset size (low $D$). 

Beyond the sign of the bias, the magnitude of the bias will depend on the severity of the selection. One suggestive piece of empirical evidence on this front is that $\beta$ is estimated experimentally in \cite{hoffmann2022training} and \cite{besiroglu2024chinchilla} at around $0.37$.\footnote{Since these experiments keep algorithms fixed, there is no possibility of selection bias.} However, \cite{ho2024algorithmic} estimate $\beta$ around $0.04$, potentially indicating negative bias in $\beta$.\footnote{See Table 2 in \cite{ho2024algorithmic}. Note that the original paper argues this discrepancy is due to the scale of language models in their dataset. } Since the estimate of algorithmic progress is $\frac{\beta_{\mathrm{year}}}{\beta}$, then a first-pass estimate is that algorithmic progress is overstated by around a factor of nine.\footnote{This is very much a first-pass argument as \cite{hoffmann2022training}'s and \cite{ho2024algorithmic}'s results could differ for a great many reasons beyond selection bias. }

\section{Monte Carlo Simulation}

To verify the theorem and quantify the bias magnitude, we conduct Monte Carlo simulations using the actual dataset sizes $D_i$ and years $Y_i$ from \cite{ho2024algorithmic}'s data. We generate data according to the DGP in equation \eqref{DGP_simple}, where we set $\beta = 0.37$ (from \cite{besiroglu2024chinchilla}) and $\beta' = 0.45$ (roughly half of what \cite{ho2024algorithmic} find).

We generate $\epsilon_D \sim N(0, \sigma^2)$ where $\sigma$ is set to half the mean of $\beta'(Y_i - Y_0)$. We vary the correlation between $\epsilon_D$ and $\ln D$ from -1 to 1 while ensuring $\text{Cov}(\epsilon_D, Y) = 0$. We then estimate the model in equation \eqref{estimation_simple} using OLS. We then compare the estimated parameters to the true parameters. 

Figure \ref{fig:mc_results} shows results from 1000 simulations for each correlation value. The simulations confirm the theorem: positive correlation leads to underestimating algorithmic progress, while negative correlation leads to overestimation. The bias is economically significant---with correlation 0.5, the true 45\% annual progress is estimated as only 16.5\%, while with correlation -0.5 it appears as 93\%. Of course, the actual numbers we get are sensitive to the distribution of $\epsilon_D$ that we pick, especially the variance. Accordingly, the exact mangitudes should not be taken too seriously. 

\begin{figure}[h!]
    \centering
    \includegraphics[width=\textwidth]{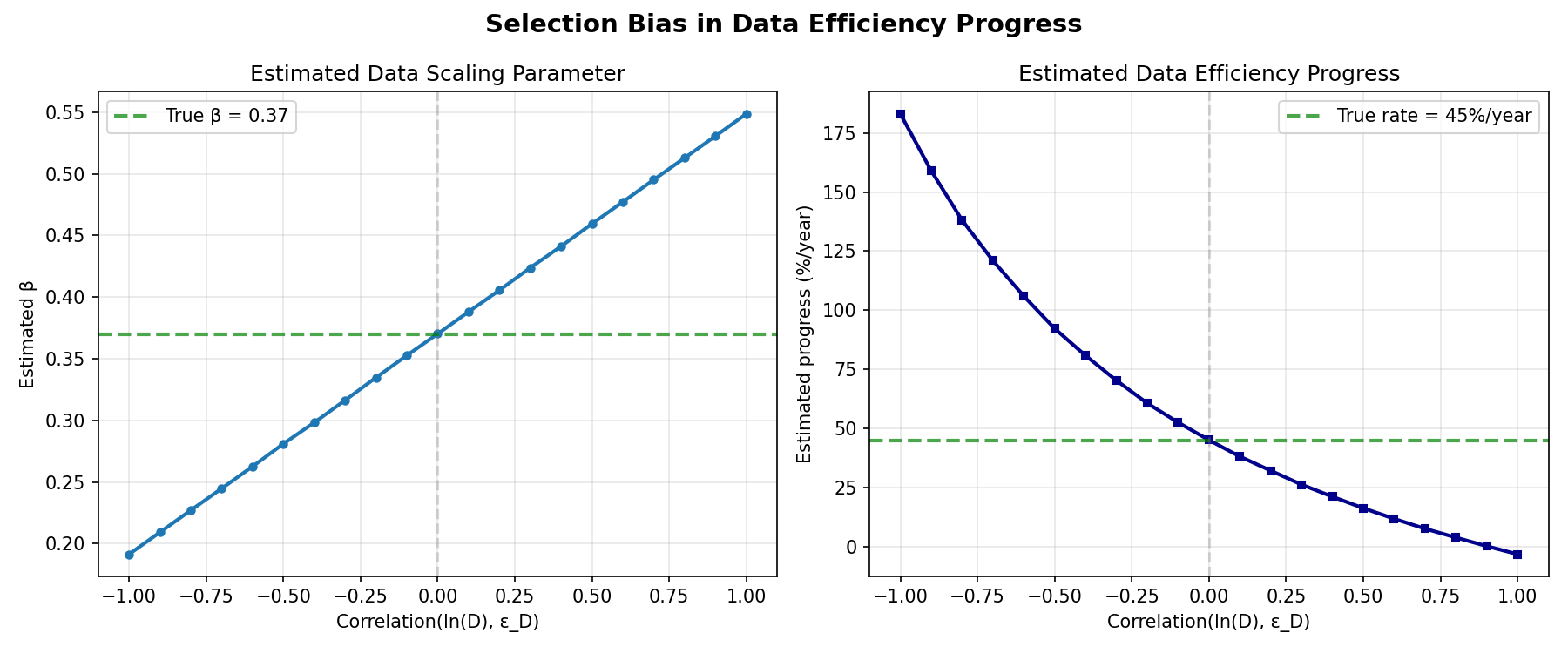}
    \caption{Monte Carlo Simulation}
    \label{fig:mc_results}
\end{figure}

\subsection{Conclusion}
Although this note has focused on \cite{ho2024algorithmic}, the general lesson of endogeneity of compute choices to algorithm quality applies to almost all attempts to infer algorithmic progress from observational data. 

One potential solution to this problem is to focus on experimental work where compute or algorithms can be randomly assigned. Alternatively, one could use observational data but find some plausible instrument that e.g., exogenously varies compute allocations but not algorithm quality. 

\section{Appendix}

% ===== Proof (align*, no extra numbering) =====
\begin{proof}[Proof of Theorem~\ref{thm:bias-sign}]

\textit{Step 1 (Population normal equations at the probability limits).}
\begin{align*}
\Cov(\ln D,\ln L)
&= -\,\plim\hat{\beta}\;\Var(\ln D)
   -\,\plim\hat{\beta}_{\mathrm{year}}\;
      \Cov(\ln D,Y-Y_0),\\
\Cov(Y-Y_0,\ln L)
&= -\,\plim\hat{\beta}\;
      \Cov(\ln D,Y-Y_0)
   -\,\plim\hat{\beta}_{\mathrm{year}}\;
      \Var(Y-Y_0).
\end{align*}

\textit{Step 2 (Evaluate covariances under the DGP \eqref{DGP_simple}).}
Using \(Y_0\) constant (so \(\Var(Y-Y_0)=\Var(Y)\) and \(\Cov(\ln D,Y-Y_0)=\Cov(\ln D,Y)\)) and \(\Cov(Y,\epsilon_D)=0\) by construction,
\begin{align*}
\Cov(\ln D,\ln L)
&= -\beta\,\Var(\ln D)
   -\beta_{\mathrm{year}}\,\Cov(\ln D,Y)
   -\beta\,\Cov(\ln D,\epsilon_D),\\
\Cov(Y-Y_0,\ln L)
&= -\beta\,\Cov(\ln D,Y)
   -\beta_{\mathrm{year}}\,\Var(Y).
\end{align*}

\textit{Step 3 (Equate and solve for probability limits).}
Equating Step 1 and Step 2 gives the linear system
\begin{align*}
\beta\,\Var(\ln D)
&+\beta_{\mathrm{year}}\,\Cov(\ln D,Y)
+\beta\,\Cov(\ln D,\epsilon_D)\\
&= \plim\hat{\beta}\,\Var(\ln D)
 + \plim\hat{\beta}_{\mathrm{year}}\,
    \Cov(\ln D,Y),\\[4pt]
\beta\,\Cov(\ln D,Y)
&+\beta_{\mathrm{year}}\,\Var(Y)\\
&= \plim\hat{\beta}\,\Cov(\ln D,Y)
 + \plim\hat{\beta}_{\mathrm{year}}\,\Var(Y).
\end{align*}
Eliminating \(\plim\hat{\beta}_{\mathrm{year}}\):
\begin{align*}
\plim\hat{\beta}\,
\Bigl(\Var(\ln D)\Var(Y)
      -\Cov(\ln D,Y)^2\Bigr)
&= \beta\,
   \Bigl(\Var(\ln D)\Var(Y)
         -\Cov(\ln D,Y)^2\Bigr)\\
&\quad+\beta\,\Var(Y)\,
         \Cov(\ln D,\epsilon_D).
\end{align*}
Hence
\begin{align*}
\plim\hat{\beta}
= \beta\!\left(
  1+\frac{\Var(Y)\,\Cov(\ln D,\epsilon_D)}
         {\Var(\ln D)\Var(Y)
          -\Cov(\ln D,Y)^2}
\right).
\end{align*}
Eliminating \(\plim\hat{\beta}\) instead gives
\begin{align*}
\plim\hat{\beta}_{\mathrm{year}}
= \beta_{\mathrm{year}}
  -\beta\,
   \frac{\Cov(\ln D,\epsilon_D)\,
         \Cov(\ln D,Y)}
        {\Var(\ln D)\Var(Y)
         -\Cov(\ln D,Y)^2}.
\end{align*}

\textit{Step 4 (Bias of the ratio; residual-variance rewrite).}
Combining the two limits and subtracting \(\beta_{\mathrm{year}}/\beta\),
\begin{align*}
\operatorname{bias}\!\left(\tfrac{\hat{\beta}_{\mathrm{year}}}{\hat{\beta}}\right)
&= -\,
   \frac{\Bigl(\beta_{\mathrm{year}}\Var(Y)
               +\beta\,\Cov(\ln D,Y)\Bigr)\,
         \Cov(\ln D,\epsilon_D)}
        {\beta\Bigl(\Var(\ln D)\Var(Y)
                    -\Cov(\ln D,Y)^2
                    +\Var(Y)\,
                     \Cov(\ln D,\epsilon_D)\Bigr)}\\[4pt]
&= -\left(
      \tfrac{\beta_{\mathrm{year}}}{\beta}
      +\tfrac{\Cov(\ln D,Y)}{\Var(Y)}
    \right)
    \cdot
    \frac{\Cov(\ln D,\epsilon_D)}
         {\Var\!\left(
           \ln D-\tfrac{\Cov(\ln D,Y)}{\Var(Y)}\,Y
          \right)
          +\Cov(\ln D,\epsilon_D)}.
\end{align*}

\textit{Step 5 (Sign).}
Because \(\beta>0\), \(\beta_{\mathrm{year}}>0\), and
\(\Cov(\ln D,Y)>0\),
\[
\tfrac{\beta_{\mathrm{year}}}{\beta}
+\tfrac{\Cov(\ln D,Y)}{\Var(Y)}
> 0.
\]
By the stated bound,
\[
\Var\!\left(
 \ln D-\tfrac{\Cov(\ln D,Y)}{\Var(Y)}\,Y\right)
+\Cov(\ln D,\epsilon_D)
> 0.
\]
Hence the denominator is positive and
\[
\operatorname{sign}\!\left(\operatorname{bias}\!\left(
\tfrac{\hat{\beta}_{\mathrm{year}}}{\hat{\beta}}\right)\right)
= -\,\operatorname{sign}\!\bigl(\Cov(\ln D,\epsilon_D)\bigr).
\qedhere
\]
\end{proof}

\bibliography{research}     % looks for research.bib in the same folder
\end{document}